\documentclass[10pt,conference]{IEEEtran}

\usepackage[english]{babel}
\usepackage{amsmath,amsfonts,amssymb,bbm,amsthm,mdwlist}
\usepackage[ansinew]{inputenc}

\newtheorem{theorem}{Theorem}

\newtheorem{proposition}[theorem]{Proposition}
\newtheorem{corollary}[theorem]{Corollary}

\newtheorem{definition}[theorem]{Definition}

\setlength{\parindent}{0mm}
\def\m{{\bf m}}
\def\ds{\displaystyle}

\def\rr{\mathbb R}

\begin{document}

\title{Entropy Measures vs. Algorithmic Information }

\author{
\authorblockN{Andreia Teixeira}
\authorblockA{LIACC - U.Porto\\
andreiasofia@ncc.up.pt}
\and
\authorblockN{Armando Matos}
\authorblockA{LIACC - U.Porto\\
acm@dcc.fc.up.pt}
\and
\authorblockN{Andr\'{e} Souto}
\authorblockA{LIACC - U.Porto\\
andresouto@dcc.fc.up.pt}
\and
\authorblockN{Lu\'{i}s Antunes}
\authorblockA{LIACC - U.Porto\\
lfa@dcc.fc.up.pt}
}
\maketitle

\begin{abstract}
We further study the connection between Algorithmic Entropy and Shannon and Rényi Entropies. It is given an example for which the
difference between the expected value of algorithmic entropy and Shannon Entropy meets the known upper-bound and, for Rényi Entropy, proving that all other values of the
parameter ($\alpha$), the same difference can be big. We also prove that for a particular type of distributions Shannon Entropy is able to capture the notion
of computationally accessible information by relating it to time-bounded algorithmic entropy. In order to better
study this unexpected relation it is investigated the behavior of the
different entropies (Shannon, Rényi and Tsallis) under the
distribution based on the time-bounded algorithmic entropy.
\end{abstract}

\section{Introduction}
Algorithmic Entropy, the size of the smallest program that
generates a string, denoted by $K(x)$, is a rigorous measure of the
amount of information, or randomness, in an individual object $x$. 
Algorithmic entropy and Shannon entropy are conceptually
very different, as the former is based on the size of programs and the
later in probability distributions. Surprisingly, they are, however,
closely related. The expectation of the algorithmic entropy equals (up to a constant depending on the distribution) the Shannon entropy. 

Shannon entropy measures the amount of information in situations where
unlimited computational power is available. However this measure does
not provide a satisfactory framework for the analysis of public key
cipher systems which are based on the limited computational power of the adversary. The public key and the cipher text together
contain all the Shannon information concerning the plaintext, but the
information is computationally inaccessible. So, we face this
intriguing question: what is accessible information?

By considering the time-bounded algorithmic entropy (length of the program limited to run in time $t(|x|)$) we can take into account the
computational difficulty (time) of extracting information.
Under some computational restrictions on the distributions we
show (Theorem \ref{promise}) that Shannon entropy equals (up to a constant that depends only
on the distribution) the time-bound algorithmic information. This
result partially solves, for this type of distributions, the
problem of finding a measure that captures the notion of computationally
accessible information. This result is unexpected since it
states that for the class of probability distribution such that its cumulative
probability distribution is computable in time $t(n)$, the Shannon
entropy captures the notion of computational difficulty of extracting
information within this time bound.

With this result in mind we further study the
relation of the probability distribution based on time-bounded algorithmic entropy with several entropy measures (Shannon, Rényi and Tsallis).

\section{Preliminaries}

All strings used are elements of $\Sigma^*= \{0, 1\}^*$. $\Sigma^n$ denotes the set of strings of length $n$ and $|.|$ denotes the length of a string. It is assumed that all strings are ordered by lexicographic ordering. When $x-1$ is written, where $x$ is a string, it means the predecessor of $x$ in the lexicographic order. The function $\log$ is the function $\log_2$. The real interval between $a$ and $b$, including $a$ and excluding $b$ is represented by $[a,b)$.

\subsection{Algorithmic Information Theory}

We give essential definitions and basic results which will be need in the rest of the paper. A more detailed reference is \cite{LiVi}. The model of computation used is the prefix free Turing machine. A set of strings $A$ is prefix-free if no string in $A$ is prefix of another string of $A$. Notice that Kraft inequality guarantees that for any prefix-free set $A$, $\ds{\sum_{x\in A}2^{-|x|}\leq 1}$. 

\begin{definition}
Let $U$ be a fixed prefix free universal Turing machine. For any
string $x\in \Sigma^*$, the Kolmogorov complexity or algorithmic
entropy of $x$ is $K(x)=
\min_{p}\{|p| : U(p)= x \}$.

For any time constructible $t$, the $t$-time-bounded algorithmic entropy (or $t$-time-bounded Kolmogorov complexity) of $x\in\Sigma^*$ is, $\ds{K^{t}(x)}$ $ =\min_p\{|p| : U(p) = x \hbox{ \it in at most } t(|x|)  \hbox{  \it steps}\}$.
\end{definition}
The choice of the universal Turing
machine affects the running time of a program at most by a logarithmic
factor and the program length at most a constant number of extra
bits. 

\begin{proposition}
For all $x$ and $y$ we have:
\begin{enumerate}
	\item $K(x)\leq K^t(x)\leq |x|+O(1)$;
	\item $K(x|y)\leq K(x)+O(1)$ and $K^t(x|y)\leq K^t(x)+O(1)$;
\end{enumerate}
\end{proposition}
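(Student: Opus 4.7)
All four inequalities are standard and follow from simple simulation arguments on the universal prefix machine $U$; the plan is essentially to exhibit, in each case, a program witnessing the right-hand side that can be transformed into one witnessing the left-hand side with only a constant blow-up.

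For part 1, the inequality $K(x)\leq K^t(x)$ is immediate: any program $p$ that makes $U$ output $x$ within $t(|x|)$ steps is in particular a program that makes $U$ output $x$, so the minimum over the time-unrestricted set of programs can only be smaller. For the upper bound $K^t(x)\leq |x|+O(1)$, I would fix a short prefix-free ``literal print'' routine $q_0$ that, on input the self-delimiting encoding of a string $w$, copies $w$ to the output and halts. Taking $p=q_0\,\bar{x}$ where $\bar{x}$ is a standard self-delimiting description of $x$ of length $|x|+O(1)$ (or, more carefully, $|x|+O(\log|x|)$ unless we absorb the prefix overhead into $q_0$ via a machine-dependent convention), $U(p)$ outputs $x$ in time linear in $|x|$, which for any time-constructible $t$ with $t(n)\geq n$ is at most $t(|x|)$. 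Hence $K^t(x)\leq |x|+O(1)$.

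For part 2, the key observation is that a conditional program is allowed to simply ignore its auxiliary input. Let $p^*$ be a shortest (resp.\ shortest $t$-time-bounded) program for $x$, and let $q_1$ be a fixed short prefix-free routine that, on input $(y,p)$, discards $y$ and simulates $U$ on $p$. Then $q_1\,p^*$ is a program of length $|p^*|+O(1)$ that, given $y$ as auxiliary input, outputs $x$; in the time-bounded case the simulation overhead is a constant factor, which can be absorbed by the time-constructibility of $t$ together with the usual convention that $t$ dominates the universal-machine simulation cost. This yields both $K(x|y)\leq K(x)+O(1)$ and $K^t(x|y)\leq K^t(x)+O(1)$.

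The only delicate point I foresee is the bookkeeping around the prefix-free encoding and the cost of universal simulation in the time-bounded case: strictly speaking, the simulation of one prefix machine by the fixed universal one introduces a logarithmic factor in time and a constant additive term in length, so the statement implicitly relies on the standard convention (recalled in the paragraph following the definition) that such overheads are absorbed into the $O(1)$ additive constants and the time-constructible bound $t$. Once that convention is fixed, all four inequalities follow from the constructions above.
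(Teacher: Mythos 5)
The paper does not actually prove this proposition --- it is quoted as a standard fact (with \cite{LiVi} as the background reference) --- and your simulation strategy is exactly the standard route: a $t$-time-bounded witness is in particular an unrestricted witness, a fixed ``print'' routine witnesses the upper bound in part 1, and a routine that discards the auxiliary input $y$ witnesses both inequalities in part 2. The part 2 arguments, including your remark that the constant-factor simulation overhead in the time-bounded case is absorbed by the time-constructibility convention (the paper itself systematically allows such slack, e.g.\ by passing from $t(n)$ to $nt(n)$), are fine.

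There is, however, a genuine flaw in how you discharge the ``delicate point'' you yourself flag in part 1. In the prefix-free model the paper fixes, there is no ``machine-dependent convention'' that absorbs the self-delimiting overhead into an additive constant: the halting programs of $U$ form a prefix-free set, so Kraft's inequality gives $\sum_x 2^{-K(x)}\leq 1$, whereas a uniform bound $K(x)\leq |x|+c$ would force $\sum_x 2^{-K(x)}\geq 2^{-c}\sum_n 2^n\cdot 2^{-n}=\infty$. Since $K(x)\leq K^t(x)$, the same obstruction rules out $K^t(x)\leq |x|+O(1)$ with a constant uniform in $x$. What your print-program construction really yields is $K^t(x)\leq |x|+2\log|x|+O(1)$ (equivalently $|x|+K(|x|)+O(1)$, or $|x|+O(1)$ only when the complexity is conditioned on $|x|$). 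So the imprecision sits in the proposition as stated --- a common abuse in this literature --- and an honest proof must either weaken the bound to $|x|+O(\log|x|)$ or state it conditionally; claiming the overhead disappears into $q_0$ is not a correct repair, precisely because Kraft's inequality (which the paper invokes elsewhere) forbids it.
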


\begin{definition}
A string $x$ is said algorithmic-random or Kolmogorov-random if $K(x)\geq |x|$.
\end{definition}

A simple counting argument shows the existence of algorithmic-random strings of any length.

\begin{definition}
A semi-measure over a space $X$ is a function $f:X\to[0,1]$ such that $\ds{\sum_{x\in X}f(x)\leq 1}$. We say that a semi-measure is a measure if the equality holds. A semi-measure is called constructive if it is semi-computable from below.
\end{definition}

The function $\m(x)=2^{-K(x)}$ is a semi-measure which is constructible and dominates any other constructive semi-measure $\mu$ (\cite{Lev74} and \cite{Gac74}), in the sense that there is a constant $c_\mu=2^{K(\mu)}$ such that for all $x$, $\m(x)\geq c_\mu\mu(x)$. For this reason, this semi-measure is called universal. Since it is natural to consider time bounds on the Kolmogorov complexity we can define a time bounded version of $\m(x)$.

\begin{definition}
The $t$-time bounded universal distribution, denoted by $\m^t$ is $\m^t(x) = c2^{-K^t(x)}$, where $c$ is a fixed constant such that $\ds{\sum_{x\in\Sigma^*}\m^t(x)=1}$.
\end{definition}

In \cite{LiVi}, Claim 7.6.1, the authors prove that $\m^{t(\cdot)}$ dominates
every distribution $\mu$ such that $\mu^*$, the cumulative probability
distribution of $\mu$, is computable in time $t(\cdot)$.

\begin{theorem}~\label{mt}
If $\mu^*$ is computable in time $t(n)$ then there exists a constant
$c$ such that, for all $x\in\Sigma^*$, $\m^{nt(n)}(x)\geq 2^{-K^{nt(n)}(\mu)}\mu(x)$.
\end{theorem}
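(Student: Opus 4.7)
The plan is to build a short, efficiently decodable program for each $x$ by adapting the classical Shannon--Fano--Elias construction to the time-bounded setting. Since $\mu^*$ is computable in time $t(n)$ and $\mu(x)=\mu^*(x)-\mu^*(x-1)$, we can, given a description of $\mu$, evaluate $\mu$ pointwise in time $O(t(n))$. I would then associate to each $x$ a codeword $c_x$ obtained by truncating the binary expansion of some point in the interval $[\mu^*(x-1),\mu^*(x))$ to $\lceil \log 1/\mu(x)\rceil+1$ bits, chosen so that the dyadic subinterval determined by $c_x$ lies entirely inside $[\mu^*(x-1),\mu^*(x))$. Standard arguments show these codewords form a prefix-free set, so they are legitimate program prefixes for $U$.

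Next I would describe the decoding program $p$ that is hard-wired with (a shortest $K^{nt(n)}$-program for) $\mu$ and, on input $c_x$, recovers $x$. The decoder performs a binary search over $\Sigma^{\le n}$: at each step it computes $\mu^*$ at a candidate string to decide whether the real number encoded by $c_x$ lies above or below, narrowing the search until the unique $x$ with $c_x\in [\mu^*(x-1),\mu^*(x))$ is found. This takes at most $|x|$ evaluations of $\mu^*$, each of which costs time $t(|x|)$, giving total running time $O(n\,t(n))$, which fits in the $nt(n)$ bound after absorbing constants into the choice of universal machine.

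Assembling the estimate, the full program for $x$ has length
\[
 |p| + |c_x| \le K^{nt(n)}(\mu) + \log\frac{1}{\mu(x)} + O(1),
\]
so $K^{nt(n)}(x)\le K^{nt(n)}(\mu)-\log\mu(x)+O(1)$. Exponentiating and using $\m^{nt(n)}(x)=c\,2^{-K^{nt(n)}(x)}$ yields $\m^{nt(n)}(x)\ge 2^{-K^{nt(n)}(\mu)}\mu(x)$ up to the multiplicative constant $c$, which is what the statement asserts.

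The main obstacle I expect is bookkeeping for the time budget: one has to check that the extra simulation overhead of $U$ (the logarithmic slowdown mentioned right after the definition of $K^t$), the cost of reading off digits of $c_x$, and the binary search together still fit within $nt(n)$ rather than some larger polynomial in $n$ and $t(n)$. This is why the theorem is stated with $nt(n)$ rather than $t(n)$, and why the construction must use binary search (which gives the $n$ factor) rather than a linear scan through $\Sigma^{\le n}$ (which would give a $2^n$ factor). The rest of the argument is the classical Shannon--Fano--Elias coding inequality, so no further subtlety is expected there.
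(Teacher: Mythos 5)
Your proposal is essentially the standard argument: the paper itself gives no proof of this theorem, citing it as Claim 7.6.1 of Li--Vit\'anyi, and that claim is proved exactly by the Shannon--Fano--Elias coding of $x$ via $\mu^*$ with binary-search decoding in time $O(n\,t(n))$, which is what you reconstruct. Your reading of the statement's constant (absorbing the multiplicative $c$ and the $O(1)$ in code length) and your caveat about the simulation-overhead bookkeeping match how the cited source treats these points, so no substantive gap remains.
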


\subsection{Entropies}

We consider several types of entropies. Shannon information theory was introduced in 1948 by C.E. Shannon \cite{Sha48}. Information theory quantifies the uncertainty about the results of an experiment. It is based on the concept of entropy which measures the number of bits necessary to describe an outcome from an ensemble. 

\begin{definition}[Shannon Entropy \cite{Sha48}]
Let $\mathcal{X}$ be a finite or infinitely countable set and let $X$ be a random variable taking values in $\mathcal{X}$ with distribution $P$. The \emph{Shannon Entropy of random variable} $X$ is given by
$$H(X)=-\sum_{x \in \mathcal{X}}P(x) \log P(x).$$ 
\end{definition}  

The Rényi entropy is a generalization of Shannon entropy. Formally the Rényi entropy is defined as follows:

\begin{definition}[Rényi Entropy \cite{Ren61}]
Let $\mathcal{X}$ be a finite or infinitely countable set and let $X$ be a random variable taking values in $\mathcal{X}$ with distribution $P$ and let $\alpha\not=1$ be a positive real number. The Rényi Entropy of order $\alpha$ of the random variable $X$ is defined as:
\[H_\alpha(P)=\frac{1}{1-\alpha}\log\left(\sum_{x\in\mathcal{X}}P(x)^\alpha\right).\]
\end{definition}

It can be shown that $\ds{\lim_{\alpha\to1}H_\alpha(X)=H(X)}$. 

\begin{definition}[Min-Entropy]
Let $\mathcal{X}$ be a finite or infinitely countable set and let $X$ be a random variable taking values in $\mathcal{X}$ with distribution $P$. We define the Min-Entropy of $P$ by:
\[
H_\infty(P)=-\log\max_{x\in\mathcal X}P(x).
\]
\end{definition}
It is easy to see that $\ds{H_\infty(P)=\lim_{\alpha\to\infty}H_\alpha(P)}$.

\begin{definition}[Tsallis Entropy \cite{tsallis}]
Let $\mathcal{X}$ be a finite or infinitely countable set and let $X$ be a random variable taking values in $\mathcal{X}$ with distribution $P$ and let $\alpha\not=1$ be a positive real number. The Tsallis Entropy of order $\alpha$ of the random variable $X$ is defined as:
$$T_{\alpha}(P)= \frac{\ds{1-\sum_{x\in\Sigma*} P(x)^{\alpha}}}{\alpha-1}.$$
\end{definition}

\subsection{Algorithmic Information vs. Entropy Information}

Given the conceptual differences in the definition of Algorithmic
Information Theory and Information Theory, it is surprising that under some weak restrictions on the distribution of the
strings, they are closely related, in the sense that the expectation of the
algorithmic entropy equals the entropy of the distribution up to a constant that
depends only on that distribution.

\begin{theorem}\label{ke}
Let $P(x)$ be a recursive probability distribution. Then:
\[
0\leq \sum_xP(x)K(x)-H(P)\leq K(P)
\]
\end{theorem}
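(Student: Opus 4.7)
The plan is to prove the two inequalities separately, using completely different tools: the lower bound will come from Kraft's inequality combined with the information (Gibbs) inequality, while the upper bound will come from the universality of the a priori semi-measure $\m$.

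For the lower bound $\sum_x P(x)K(x)\geq H(P)$, I would first observe that the set of shortest programs for the strings $x\in\Sigma^*$ is a prefix-free set (since $U$ is a prefix-free Turing machine), so by Kraft's inequality $\sum_x 2^{-K(x)}\leq 1$. Define auxiliary quantities $q(x)=2^{-K(x)}/Z$ with $Z=\sum_x 2^{-K(x)}\leq 1$; then $q$ is a probability distribution. I would then invoke the non-negativity of the Kullback--Leibler divergence (equivalently Gibbs' inequality), which yields
\[
\sum_x P(x)\log\frac{P(x)}{q(x)}\geq 0,
\]
and rearrange this, using $\log Z\leq 0$, to obtain $\sum_x P(x)K(x)\geq -\sum_x P(x)\log P(x)=H(P)$.

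For the upper bound $\sum_x P(x)K(x)-H(P)\leq K(P)$, I would invoke universality of $\m$: since $P$ is recursive, $P$ is in particular a constructive semi-measure, so there exists a constant (essentially $2^{-K(P)}$) such that $\m(x)\geq 2^{-K(P)}P(x)$ for every $x$. Taking $-\log$ on both sides and using $\m(x)=2^{-K(x)}$ gives
\[
K(x)\leq -\log P(x)+K(P).
\]
Multiplying this pointwise inequality by $P(x)$ and summing over $x\in\Sigma^*$ immediately yields $\sum_x P(x)K(x)\leq H(P)+K(P)$.

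The only real obstacle is purely notational: making sure the universality constant for $\m$ is interpreted correctly (the bound $\m(x)\geq 2^{-K(P)}P(x)$, so that taking logs flips the inequality in the right direction), and being careful that the $K(P)$ appearing on the right-hand side really is the algorithmic entropy of a description of the recursive distribution $P$, not of any particular value $P(x)$. Once these identifications are in place, both inequalities follow in one line each from standard results already stated in the preliminaries.
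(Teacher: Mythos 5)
Your proof is correct and takes essentially the same route as the paper: the upper bound is exactly the paper's argument (universality of $\m$ giving $K(x)\leq -\log P(x)+K(P)$, then averaging against $P$), and your lower bound merely unfolds the Noiseless Coding Theorem that the paper cites, via Kraft's inequality for the prefix-free shortest programs plus Gibbs' (KL non-negativity) inequality. No gaps; the only difference is that you prove the coding-theorem step from first principles where the paper invokes it as a known result.
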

\proof(Sketch, see \cite{LiVi} for details) The first inequality follows directly from the well known Noiseless Coding Theorem, that, for this distributions, states
\[H(P)\leq \sum_xP(x)K(x)\]
Since $\m$ is universal, $P(x)\leq 2^{K(P)}\m(x)$, for all $x$, which is equivalent to $\log P(x) \leq K(P)-K(x)$. Thus, we have:
$$\begin{array}{c}\ds{\sum_x P(x)K(x)}-H(P) = \ds{\sum_x (P(x)(K(x)+\log P(x)))}\\
\leq \ds{\sum_x (P(x)(K(x)+K(P)-K(x)))}= K(P)\;\;\;\;\Box
\end{array}$$

\section{Algorithmic Entropy vs. Entropy: How Close?}

Given the surprising relationship between algorithmic entropy and entropy, in this section we investigate how close they are. We study
also the relation between algorithmic entropy and Rényi
entropy. In particular, we will find the values of $\alpha$ for which the  same relation as in Theorem
\ref{ke} holds for the Rényi entropy. We also prove that for a
particular type of distributions, entropy is able to capture the notion
of computationally accessible information. 

First we show that the interval $[0,K(P)]$ of the inequalities of Theorem \ref{ke} is tight:

\begin{proposition}
There exist distributions $P$, with $K(P)$ large such that:

\begin{enumerate}
	\item $\ds{\sum_x P(x)K(x) - H(P) = K(P)-O(1)}$.
	\item $\ds{\sum_x P(x)K(x) - H(P)=O(1)}$.
\end{enumerate}
\end{proposition}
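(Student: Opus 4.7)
The plan is to exhibit two explicit constructions realizing each extreme of the interval $[0,K(P)]$ in Theorem~\ref{ke}.

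\textbf{Part (1).} I would take $P$ to be the point distribution at a Kolmogorov-random string: pick $r$ of length $n$ with $K(r)\geq n$, and set $P(r)=1$, $P(x)=0$ otherwise. Then $H(P)=0$, $\ds{\sum_x P(x)K(x)=K(r)}$, and $K(P)=K(r)+O(1)$ since the description of $P$ is equivalent to that of $r$ up to a constant-size wrapper. Substituting, $\ds{\sum_x P(x)K(x)-H(P)=K(r)=K(P)-O(1)}$, and $K(P)$ is made arbitrarily large by choosing $n$ large.

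\textbf{Part (2).} The goal is the opposite: bound the difference by an absolute constant while $K(P)$ grows. Rewriting
\[
\sum_x P(x)K(x)-H(P)=\sum_x P(x)\bigl(K(x)+\log P(x)\bigr),
\]
we see that we need $\log P(x)\approx -K(x)$ on the support, i.e. $P$ close to $\m$. My candidate is a finite truncation of the time-bounded universal distribution: fix a simple time bound $t$ (for instance $t(k)=2^k$), pick an integer $N$ with $K(N)$ large (take $N$ with a random binary representation), and set
\[
P(x)=\frac{2^{-K^t(x)}}{Z}\text{ for }|x|\leq N,\qquad Z=\sum_{|y|\leq N}2^{-K^t(y)},
\]
with $P(x)=0$ otherwise. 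This $P$ is recursive because the defining sum is finite and $K^t$ is computable. Using $K(x)\leq K^t(x)$, the direct computation
\[
\sum_x P(x)K(x)-H(P)=-\log Z+\sum_x P(x)\bigl(K(x)-K^t(x)\bigr)\leq -\log Z
\]
gives the upper bound on the difference. The key point is that $Z$ is bounded below by a universal positive constant coming from the single-term contribution $2^{-K^t(\epsilon)}=2^{-O(1)}$ of the empty string, so $-\log Z=O(1)$. Meanwhile $K(P)=K(N)+O(1)$: the upper bound follows from specifying $P$ via the pair $(t,N)$ (with $K(t)=O(1)$), and the lower bound from reading off $N$ as the maximum length appearing in the support of $P$. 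Letting $K(N)$ grow yields the claim.

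The main obstacle is part (2), where one has to verify carefully that $-\log Z$ is bounded independently of $N$ (ensured by the fixed empty-string contribution) and that $K(N)\leq K(P)+O(1)$ through a short decoding argument extracting the length threshold $N$ from $P$. Part (1) is essentially bookkeeping once the random string is chosen.
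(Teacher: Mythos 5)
Your proof is correct, but only Part (1) coincides with the paper's argument (the point mass at a Kolmogorov-random string); for Part (2) you take a genuinely different route. The paper's construction is a two-atom distribution supported on two algorithmically simple strings $x_0,x_1$ with probabilities $0.y$ and $1-0.y$ for an incompressible $y$: the complexity of $P$ is hidden entirely in the probability \emph{values}, so $\sum_x P(x)K(x)\leq c$ and $H(P)\leq 1$ trivially, while $K(P)\approx n$. You instead hide the complexity in the support threshold $N$ and make $P$ proportional to $2^{-K^t}$ there, so that $K(x)+\log P(x)=K(x)-K^t(x)-\log Z\leq -\log Z$ pointwise; combined with the lower bound $0$ from Theorem~\ref{ke} (your $P$ is indeed recursive since $K^t$ is computable) this gives the $O(1)$ bound. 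Your approach is heavier machinery-wise but has the merit of making explicit the intuition that the gap is small exactly when $P$ is close to a $2^{-K}$-shaped measure, and it connects nicely to the time-bounded universal distribution studied later in the paper; the paper's construction is more elementary. Two small points you should tighten: the lower bound $Z\geq 2^{-O(1)}$ is safer argued from a fixed short string whose literal print program runs within the time bound (with $t(k)=2^k$ the empty string's budget of $t(0)=1$ step may be too stingy on a concrete machine), and the claim $K(N)\leq K(P)+O(1)$ needs the observation that on the support $P(0^\ell)\geq 2^{-\ell-O(1)}$, so membership of length $\ell$ in the support can be decided from a sufficiently precise approximation of $P(0^\ell)$ — you flagged this decoding step, and it does go through.
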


\begin{proof}
\begin{enumerate}
	\item Fix $x_0\in\Sigma^n$. Consider the following probability distribution:
$$P_n(x)= \left\{
\begin{array}{lll}
1 & \textnormal{if } x= x_0\\
0 & \textnormal{otherwise} 
\end{array}
\right.$$

Notice that describing the distribution is equivalent to describe $x_0$. So, $K(P_n)= K(x_0)+O(1)$. On the other hand, $\sum_x P_n(x)K(x) - H(P_n) = K(x_0)$. So, if $x_0$ is Kolmogorov-random then $K(P_n)\approx n$.

\item Let $y$ be a string of length $n$ such that $K(y)=n-O(1)$ and consider the following probability distribution over $\Sigma^*$:
\[
P_n(x)=\left\{ \begin{array}{ll}

							0.y & \textnormal{if $x=x_0$}\\
							
							1-0.y & \textnormal{if $x=x_1$}\\
							
							0 & \textnormal{otherwise}

\end{array}\right.\]
where $0.y$ represent the real number between 0 and 1 which binary representation is $y$. Notice that we can choose $x_0$ and $x_1$ such that $K(x_0)=K(x_1)\leq c$ where $c$ is a constant that does not depend on $n$.

Thus we have:

\begin{enumerate}
	\item $K(P_n)\approx n$, since describing $P_n$ is equivalent to describe $x_0$, $x_1$ and $y$;
	
	\item $\ds{\sum_x P_n(x)K(x)=(0.y)K(x_0)+(1-0.y) K(x_1)}$ ${\leq 0.y\times c+(1-0.y)\times c=c}$;
	
	\item $H(P_n)=- 0.y\log 0.y -(1-0.y)\log(1-0.y)\leq 1$
\end{enumerate}

Thus $\sum P_n(x)K(x)-H(P_n)\leq c<\!<K(P_n)\approx n$.
\end{enumerate}
\end{proof}

Now we address the question if the same relations as in Theorem \ref{ke} holds for the Rényi entropy. We show, in fact, the Shannon entropy is the ``smallest'' entropy that verify these properties.

Since, for every $0<\varepsilon<1$, 
$$H_\infty \leq H_{1-\varepsilon} (X) \leq H(X) \leq H_{1+\varepsilon}(X) \leq H_0(X)$$
it follows that
$$\ds{0 \underbrace{\leq}_{\alpha \geq 1} \sum_x P(x)K(x) - H_\alpha(P) \underbrace{\leq}_{\alpha \leq 1} K(P)}$$ 

In the next result we show that the inequalities above are, in general, false for different values of $\alpha$.

\begin{theorem}
For every $\Delta>0$ and $\alpha>1$ there exists a recursive distribution $P$ such,
\begin{enumerate}
	\item $\ds{\sum_x P(x)K(x) - H_\alpha(P) \geq (K(P))^\alpha}$
	\item $\ds{\sum_x P(x)K(x) - H_\alpha(P) \geq K(P)+\Delta}$
\end{enumerate}
\end{theorem}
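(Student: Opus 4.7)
The plan is to construct, for given $\alpha>1$ and $\Delta>0$, an explicit recursive distribution $P_n$ supported on strings of length $n$ that concentrates a constant fraction of its mass on the simple string $0^n$ while spreading the rest uniformly over the remaining length-$n$ strings, and then choose $n$ large. Concretely, I would set $P_n(0^n)=1/2$ and $P_n(x)=\frac{1}{2(2^n-1)}$ for each $x\in\Sigma^n$ with $x\neq 0^n$, and zero elsewhere. Since $P_n$ is fully specified by the integer $n$, it is recursive and $K(P_n)\leq c\log n$ for some absolute constant $c$.

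The core of the proof is three independent estimates. First, a lower bound for $\sum_x P_n(x)K(x)$: by the standard counting argument, at most $2^{-k+1}$ of the strings of length $n$ satisfy $K(x)<n-k$, so the average of $K(x)$ over $\Sigma^n$ is at least $n-O(1)$; therefore $\sum_x P_n(x)K(x)\geq \tfrac12 K(0^n)+\tfrac12(n-O(1))\geq n/2-O(1)$. Second, an upper bound for $H_\alpha(P_n)$: direct computation gives
\[
H_\alpha(P_n)=\frac{1}{1-\alpha}\log\!\Bigl((1/2)^\alpha\bigl(1+(2^n-1)^{1-\alpha}\bigr)\Bigr)=\frac{\alpha}{\alpha-1}-\frac{1}{\alpha-1}\log\!\bigl(1+(2^n-1)^{1-\alpha}\bigr),
\]
and since $\alpha>1$, the term $(2^n-1)^{1-\alpha}\to 0$, so $H_\alpha(P_n)\to \alpha/(\alpha-1)$; in particular $H_\alpha(P_n)$ is bounded by some constant $M_\alpha$ for all sufficiently large $n$. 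Third, the upper bound $K(P_n)\leq c\log n$ already noted.

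Putting these together, $\sum_x P_n(x)K(x)-H_\alpha(P_n)\geq n/2-O(1)-M_\alpha\geq n/3$ for $n$ large. Since $(K(P_n))^\alpha\leq(c\log n)^\alpha$ is polylogarithmic in $n$, and $K(P_n)+\Delta\leq c\log n+\Delta$ is also dominated by $n/3$, both inequalities (1) and (2) follow simply by choosing $n$ sufficiently large as a function of $\alpha$ and $\Delta$. The only step requiring care is the averaging estimate $\frac{1}{2^n}\sum_{x\in\Sigma^n}K(x)\geq n-O(1)$; the rest reduces to asymptotics of $(2^n-1)^{1-\alpha}$ and bookkeeping with the constants $c$ and $M_\alpha$, both of which depend only on $\alpha$ and not on $n$.
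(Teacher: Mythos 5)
Your proof is correct, but it follows a genuinely different analytical route from the paper's. The paper reduces the theorem to its Corollary and there uses essentially the same family of distributions as yours (mass $1/2$ on $0^n$, the rest spread uniformly over length-$n$ strings), but its analysis works near $\alpha=1$: it computes $H(P_n)=\frac{n+1}{2}$, ties $\alpha-1$ to $(n_0-1)^{-1.8}$, performs a Taylor-series expansion to get $H_\alpha(P_n)\approx H(P_n)-\frac{\ln 2}{8}(n-1)^{0.2}$, and concludes by contradiction that the gap exceeds $c\log n\geq K(P_n)$. You instead exploit the fact that in the theorem $\alpha>1$ is fixed before $P$ is chosen: then $H_\alpha(P_n)=\frac{\alpha}{\alpha-1}-\frac{1}{\alpha-1}\log\bigl(1+(2^n-1)^{1-\alpha}\bigr)\leq\frac{\alpha}{\alpha-1}$ is bounded by a constant depending only on $\alpha$, while the counting argument gives $\sum_x P_n(x)K(x)\geq \frac{n}{2}-O(1)$ and $K(P_n)=O(\log n)$, so both inequalities follow for $n$ large since a linear quantity dominates both $(c\log n)^\alpha$ and $c\log n+\Delta$. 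Your argument is simpler, avoids the series manipulation entirely, and matches the theorem's quantifier order ($\alpha$ given, then $P$ chosen) exactly; what it does not give, and what the paper's finer expansion does, is information about how the inequality already fails when $\alpha$ is allowed to approach $1$ with $n$, where the excess over $K(P)$ is only of order $(n-1)^{0.2}$ rather than linear. The only step in your write-up needing care is the averaging bound $\frac{1}{2^n}\sum_{x\in\Sigma^n}K(x)\geq n-O(1)$, which is the standard estimate (at most a $2^{-k}$ fraction of $\Sigma^n$ has $K(x)<n-k$, so the expected deficiency is $O(1)$), and restricting it to $\Sigma^n\setminus\{0^n\}$ changes the average by at most $O(1)$; with that made explicit, the proof is complete.
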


The proof of this Theorem is similar to the proof of the following Corollary:

\begin{corollary}
There exists a recursive probability distribution $P$ such that:
\begin{enumerate}
	\item $\ds{\sum_x P(x)K(x) - H_{\alpha}(P) > K(P)}$, where $\alpha>1$;
	\item $\ds{\sum_x P(x)K(x) - H_{\alpha}(P) < 0}$, where $\alpha<1$.
\end{enumerate}
\end{corollary}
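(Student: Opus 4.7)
The plan is to exhibit a recursive distribution whose expected Kolmogorov complexity sits roughly halfway between two very different R\'enyi entropies, one much smaller than $H(P)$ (for $\alpha>1$) and one much larger (for $\alpha<1$). The natural candidate is the mixture of a point mass on a trivial string with the uniform distribution on $\Sigma^n$: fix a simple string $y_0\notin\Sigma^n$ with $K(y_0)=O(1)$ and set
\[
P_n(y_0)=\tfrac12, \qquad P_n(x)=2^{-(n+1)} \text{ for every } x\in\Sigma^n.
\]
Because $P_n$ is fully specified by the integer $n$, one has $K(P_n)=O(\log n)$. Because a uniformly random string of length $n$ has expected Kolmogorov complexity $n+O(\log n)$ (by the Noiseless Coding Theorem combined with Theorem \ref{ke} applied to the uniform law on $\Sigma^n$), the mixture gives $\sum_x P_n(x)K(x)=n/2+O(\log n)$.

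The remaining computation is $\sum_x P_n(x)^\alpha=2^{-\alpha}+2^{n(1-\alpha)-\alpha}$, whose two summands swap dominance at $\alpha=1$. For $\alpha>1$, the first dominates as $n\to\infty$, giving $H_\alpha(P_n)\to\alpha/(\alpha-1)$; the target difference is then $n/2-\alpha/(\alpha-1)+O(\log n)$, which exceeds $K(P_n)=O(\log n)$ for $n$ sufficiently large, establishing (1). For $\alpha<1$, the second term dominates, so $H_\alpha(P_n)=n-\alpha/(1-\alpha)+o(1)$ and the same difference becomes $-n/2+O(1)$, which is negative (and in fact arbitrarily negative) for large $n$, establishing (2). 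Thus a single family $\{P_n\}$ witnesses both inequalities, with $n$ chosen large enough for the fixed $\alpha$ under consideration.

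The only delicate point is uniformity in $\alpha$ near $1$: the prefactor $\alpha/|1-\alpha|$ entering both asymptotic regimes blows up as $\alpha\to 1$, so the threshold on $n$ needed to enter the correct regime grows accordingly. For every fixed $\alpha\neq 1$, however, the appropriate $n$ is finite and the construction goes through; the stronger Theorem is obtained by the same construction with $n$ made large enough as a function of $\alpha$ and $\Delta$ to dominate $(K(P_n))^\alpha$ or $K(P_n)+\Delta$.
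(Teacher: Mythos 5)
Your construction and computations are correct, and for each fixed $\alpha\neq 1$ they do establish both items (with $P=P_n$ for $n$ large depending on $\alpha$, the same quantifier structure the paper itself uses, since its proof ties $\alpha$ to a threshold $n_0$). The distribution is essentially the paper's: the authors put mass $1/2$ on the trivial string $0^n$ and $2^{-n}$ on each of the $2^{n-1}$ strings starting with $1$, which is the same ``half point-mass plus uniform'' mixture you use, with the same key features $K(P_n)=O(\log n)$, $\sum_x P_n(x)K(x)\approx n/2$. Where you genuinely diverge is the analysis of $H_\alpha(P_n)$: the paper fixes the coupling $\alpha-1=(n_0-1)^{-1.8}$ and performs a second-order Taylor expansion of $\log\bigl(2^{(n-1)\alpha}+2^{n-1}\bigr)$ around $\alpha=1$, obtaining a gap $H(P_n)-H_\alpha(P_n)\approx\frac{\ln 2}{8}(n-1)^{0.2}$ that beats $K(P_n)=O(\log n)$; you instead fix $\alpha$ and let $n\to\infty$, where a dominant-term comparison of $\sum_x P_n(x)^\alpha=2^{-\alpha}+2^{n(1-\alpha)-\alpha}$ immediately gives $H_\alpha(P_n)\to\alpha/(\alpha-1)$ for $\alpha>1$ and $H_\alpha(P_n)=n-\alpha/(1-\alpha)+o(1)$ for $\alpha<1$. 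Your route is more elementary, yields a much larger gap ($\Theta(n)$ rather than $\Theta(n^{0.2})$), and hence gives the stronger Theorem (the $(K(P))^\alpha$ and $K(P)+\Delta$ bounds) with no extra work; the paper's expansion buys a quantitative statement in the harder regime where $\alpha$ is allowed to approach $1$ as $n$ grows, which your fixed-$\alpha$ asymptotics do not address (as you correctly flag). One cosmetic slip: for $\alpha<1$ the difference is $-n/2+O(\log n)$, not $-n/2+O(1)$, since the expected complexity term carries an $O(\log n)$ error; this does not affect the conclusion.
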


\proof
For $x \in \{0,1\}^n$, consider the following probability distribution:
$$P_n(x)= \left\{
\begin{array}{lll}
1/2 & \textnormal{if } x=0^n\\
2^{-n} & \textnormal{if } x=1x', x' \in \{0,1\}^{n-1}\\
0 & \textnormal{otherwise} 
\end{array}
\right.$$
	
It is clear that this distribution is recursive.

\begin{enumerate}
	\item First observe that
$$\begin{array}{lllll}
H(P_n) &= -\ds{\sum_x P_n(x)\log P_n(x)}\\
&=\ds{-\left(\frac{1}{2}\log \frac{1}{2}+\frac{1}{2^n}\log \frac{1}{2^n}2^{n-1}\right)}\\
&\ds{=-\left(-\frac{1}{2}-n\frac{1}{2^n}2^{n-1}\right)}\\
&=\ds{\frac{n+1}{2}}
\end{array}$$

Notice also that $K(P_n) = O(\log n)$.
	
We want to prove that, for every $\alpha>1$,
$$( \exists n_0) (\forall n \geq n_0) \sum_x P_n(x)K(x) - H_\alpha(P_n) > K(P_n)$$

Fix $\alpha$ such that $\alpha - 1 = \frac{1}{(n_0-1)^{1.8}}$.  
\begin{eqnarray*}
H_\alpha(P_n) & = & \frac{1}{1-\alpha}\log \ds{\sum_x P_n(x)^\alpha}\\         
            & = & \frac{1}{1-\alpha}\log\left(\frac{1}{2^\alpha}+2^{n-1}\times \frac{1}{2^{n\alpha}}\right)\\
            & = & \frac{1}{1-\alpha}\left(\log (2^{(n-1)\alpha}+2^{n-1})-n\alpha\right)
\end{eqnarray*}           
            
Now we calculate $\log \left(2^{(n-1)\alpha}+2^{n-1}\right)$. To simplify notation consider:
$$\left\{
\begin{array}{lll}
x & = & n-1\\
\alpha & = & 1+\varepsilon, \textnormal{with } \varepsilon > 0
\end{array}
\right.$$

Thus, 
$$\begin{array}{ll}\log \left(2^{(n-1)\alpha}+2^{n-1}\right) &= \log \left(2^{x(1+\varepsilon)}+2^x\right)=\\ 
&= \log \left(2^x\left(2^{x\varepsilon}+1\right)\right)\\
&= x + \log \left(2^{x\varepsilon}+1\right)
\end{array}$$

Consider $ \delta =x\varepsilon$. It is clear that
$$2^\delta = e^{\ln 2\cdot \delta} = 1+\ln 2 \cdot \delta + \frac{(\ln 2)^2\cdot \delta^2}{2}+\cdots$$
then,
$$2^\delta + 1 = 2 + \ln 2 \cdot \delta + \frac{(\ln 2)^2\cdot \delta^2}{2}+\cdots$$
and hence,
$$\log (2^\delta + 1) = \log \left(2 + \underbrace{\ln 2 \cdot \delta + \frac{(\ln 2)^2\cdot \delta^2}{2}+\cdots}_{\beta}\right)$$

Notice that $\ds{\lim_{\alpha\to 1}\beta=0}$.
\[\begin{array}{ll}\log (2 + \beta) &= \ds{\frac{1}{\ln 2}\ln(2+\beta)}\\
&\ds{= \frac{1}{\ln 2}\ln(2(1+\frac{\beta}{2}))=}\\
&\ds{= \frac{1}{\ln 2}(\ln 2 + \frac{\beta}{2}-\frac{\beta^2}{8}+\cdots)}\\
&\ds{= 1+ \frac{\beta}{2\ln 2}-\frac{\beta^2}{8\ln 2}+\cdots}\end{array}\]
Then,
$$\begin{array}{l}
\ds{\log (2 + \ln 2 \cdot \delta + \frac{(\ln 2)^2\cdot \delta^2}{2}+\cdots)} = \\= \ds{ 1 + \frac{\delta}{2}+\frac{\ln 2}{8}\delta^2 + \cdots - \frac{(\ln 2)^2}{8}\delta^3 - \frac{(\ln 2)^3}{32}\delta^4-\cdots}
\end{array}$$

So we have:
\[\log (2^{x\varepsilon}+1) = 1 + \frac{x\varepsilon}{2}+\frac{\ln 2}{8}(x\varepsilon)^2 + \cdots\]
which means,
$$x+\log (2^{x\varepsilon}+1) = x + 1 + \frac{x\varepsilon}{2}+\frac{\ln 2}{8}(x\varepsilon)^2 + \cdots$$

Thus
$$\begin{array}{ll}
H_\alpha(P_n) & = \ds{\frac{-1}{\alpha-1}(\log (2^{(n-1)\alpha}+2^{n-1})-n\alpha)}\\
& =\ds{n - \frac{n-1}{2}-\frac{\ln 2}{8}(n-1)^2(\alpha-1)-\cdots}
\end{array}$$

Notice that the rest of elements in the series expansion
$$c_1(n-1)^3(\alpha-1)^2+c_2(n-1)^4(\alpha-1)^3+\cdots, c_1,c_2 \in \mathbb{R}$$
can be ignored in the limit since $\alpha-1 = \frac{1}{(n_0-1)^{1.8}}$.

So, for all $n\geq n_0$:
$$\begin{array}{lll}
H_\alpha(P_n) & = & \ds{\frac{n+1}{2}-\frac{\ln 2}{8}(n-1)^{0.2}}
\end{array}$$

It is known that $\ds{\lim_{\alpha \to 1}H_\alpha(P_n)=H(P_n)}$. In fact, we have $H_\alpha(P_n) = H(P_n) - \frac{\ln 2}{8}(n_0-1)^{0.2}$.

Now, the first item of the Theorem is proved by contradiction. Assume by contradiction that
$$\ds{\sum_x P_n(x)K(x)}-H_\alpha(P_n) \leq c\log n , \textnormal{with $c \in \mathbb{R}$}$$
i.e., for all $n\geq n_0$
$$\ds{\sum_x P_n(x)K(x)}- H(P_n)+\frac{\ln 2}{8}(n-1)^{0.2} \leq c\log n$$

Since, $\ds{\sum_x P_n(x)K(x)}- H(P_n) \geq 0$, we would have $\frac{\ln 2}{8}(n-1)^{0.2} \leq c\log n$, which is impossible for all $n\geq n_0$. So, we conclude that

$$\ds{\sum_x P_n(x)K(x)}-H_\alpha(P_n) > c\log n.$$
	
	\item Analogous to the proof of the previous item, but now fixing $\ds{\alpha - 1 = \frac{-1}{(n-1)^{1.8}}}$.\hfill$\Box$
%
%
\end{enumerate}
\ \\

If instead of considering $K(P)$ and $K(x)$ in the inequalities of
Theorem \ref{ke} we use the time bounded version and imposing some
computational restrictions on the distributions we obtain a similar
result. Notice that for the class of distributions on the following
Theorem the entropy equals (up to a constant) the time-bounded
algorithmic entropy. 
\begin{theorem}\label{promise}
Let $P$ be a probability distribution such that $P^*$, the cumulative
probability distribution of $P$, is computable in time $t(n)$. Then:
$$\ds{0\leq\sum_x P(x)K^{nt(n)}-H(P)\leq K^{nt(n)}(P)}$$
\end{theorem}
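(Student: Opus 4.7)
The plan is to mirror the structure of the proof of Theorem~\ref{ke}, replacing $K$ by $K^{nt(n)}$ throughout and using Theorem~\ref{mt} in place of the universality of $\m$.

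For the lower bound $H(P)\leq \sum_x P(x)K^{nt(n)}(x)$, I would appeal to the Noiseless Coding Theorem exactly as in the proof of Theorem~\ref{ke}. Since $U$ is a prefix-free universal Turing machine, the set of shortest $nt(n)$-time-bounded programs for the strings of $\Sigma^*$ is prefix-free, so the lengths $\{K^{nt(n)}(x)\}_{x}$ satisfy Kraft's inequality and form valid codeword lengths. The Noiseless Coding Theorem then yields the lower bound directly; the time bound plays no role here beyond keeping the code well-defined.

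For the upper bound, the key ingredient is Theorem~\ref{mt}: because $P^*$ is computable in time $t(n)$, we have $\m^{nt(n)}(x)\geq 2^{-K^{nt(n)}(P)}P(x)$ for every $x$. Since $\m^{nt(n)}(x)=c\cdot 2^{-K^{nt(n)}(x)}$, taking logarithms converts this domination into
\[
\log P(x) \;\leq\; K^{nt(n)}(P) - K^{nt(n)}(x) + O(1).
\]
I would then substitute this estimate into the identity
\[
\sum_x P(x)K^{nt(n)}(x) - H(P) \;=\; \sum_x P(x)\bigl(K^{nt(n)}(x) + \log P(x)\bigr),
\]
at which point the $K^{nt(n)}(x)$ terms cancel inside the sum and what remains collapses to $K^{nt(n)}(P)$, plus an $O(1)$ constant coming from $c$ that is absorbed exactly as in Theorem~\ref{ke}.

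The main obstacle is purely bookkeeping with the time bound: one has to check that the hypothesis on $P^*$ is exactly what Theorem~\ref{mt} requires so that the time-bounded domination holds with the specific resource $nt(n)$, and that the prefix-code argument for the Noiseless Coding Theorem is not disturbed by the running-time restriction. Once Theorem~\ref{mt} is invoked with the right time parameter, no further computation is needed; the conceptual heart of the argument is the time-bounded universality of $\m^{nt(n)}$ for distributions with efficiently computable cumulative function.
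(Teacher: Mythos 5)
Your proposal is correct and takes essentially the same route as the paper: the upper bound is obtained exactly as in the paper's proof, by invoking Theorem~\ref{mt}, taking logarithms of the domination $\m^{nt(n)}(x)\geq 2^{-K^{nt(n)}(P)}P(x)$, and averaging against $P$. The only (immaterial) difference is the lower bound, which the paper gets by citing Theorem~\ref{ke} together with $K^{nt(n)}(x)\geq K(x)$ instead of re-running the Noiseless Coding Theorem on the time-bounded code lengths; also, your honest $O(1)$ from the normalizing constant $c$ of $\m^{nt(n)}$ is simply dropped in the paper's own proof, so this is not a gap on your side.
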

\proof
The first inequality follows directly from Theorem \ref{ke} and from the fact that $K^t(x)\geq K(x)$.

By Theorem \ref{mt}, if $P$ is a probability distribution such that $P^*$ is computable in time $t(n)$, then for all $x\in\Sigma^n$ 
$$K^{nt(n)}(x) + \log P(x) \leq K^{nt(n)}(P)$$
Then, summing over all $x$ we get 
$$\ds{\sum_x P(x)(K^{nt(n)}(x) + \log P(x)) \leq \sum_x P(x)K^{nt(n)}(P)}$$
which is equivalent to
$$\ds{\sum_x P(x)K^{nt(n)}(x) - H(P) \leq  K^{nt(n)}(P)}\;\;\;\Box$$

This result partially solves, for this type of distributions, the
problem of finding a measure that captures the notion of computationally
accessible information. This is an important open problem with
applications and consequences in cryptography.

\section{On the entropy of the time-bounded algorithmic universal distribution}

We now focus our attention on the universal distribution. Its main
drawback is the fact that it is not computable. In order to make it computable, one can impose restrictions on the time that a program can use to produce a string obtaining the time-bounded universal distribution ($\m^t(x)=c2^{-K^t(x)}$). We
investigate the behavior of the different entropies under this
distribution. The proof of the following Theorem uses some ideas from
\cite{KT}.

\begin{theorem}\label{hmt}
The Shannon entropy of the distribution $\m^t$ diverges.
\end{theorem}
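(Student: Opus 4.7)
\textbf{Proof plan for Theorem \ref{hmt}.} The plan is to expand the Shannon entropy explicitly in terms of $K^t$, reduce the claim to the divergence of a single series, and then bound that series from below by a counting argument on strings grouped by length.

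First I would unfold the definition: since $\m^t(x)=c\,2^{-K^t(x)}$ and $\sum_x \m^t(x)=1$, writing out $H(\m^t)=-\sum_x \m^t(x)\log \m^t(x)$ and using $\log \m^t(x)=\log c-K^t(x)$ yields
$$H(\m^t)\;=\;-\log c\;+\;c\sum_{x\in\Sigma^*} K^t(x)\,2^{-K^t(x)}.$$
The prefactor $-\log c$ is a finite constant (note $c\ge 1$ because $\sum_x 2^{-K^t(x)}\le 1$ by prefix-freeness and Kraft), so it suffices to prove that $S:=\sum_{x}K^t(x)\,2^{-K^t(x)}$ is infinite.

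For the lower bound on $S$ I would split by length, $S=\sum_{n\ge 1}\sum_{x\in\Sigma^n}K^t(x)\,2^{-K^t(x)}$, and count incompressible strings in each slice. Since the set of halting programs of $U$ is prefix-free, Kraft's inequality multiplied by $2^{n-1}$ shows that there are at most $2^{n-1}$ halting programs of length $\le n-1$, hence at most $2^{n-1}$ strings with $K^t(x)\le n-1$. Intersecting with $\Sigma^n$ leaves at least $2^n-2^{n-1}=2^{n-1}$ strings of length $n$ with $K^t(x)\ge n$. Combined with the universal upper bound $K^t(x)\le |x|+c_0$ from the proposition, each of these strings has $K^t(x)\in[n,n+c_0]$.

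Using that the map $k\mapsto k\,2^{-k}$ is monotonically decreasing for $k\ge 2$, every such string contributes at least $(n+c_0)\,2^{-(n+c_0)}$ to $S$, so the total contribution of length $n$ is at least $2^{n-1}(n+c_0)2^{-(n+c_0)}=(n+c_0)\,2^{-c_0-1}=\Omega(n)$. Summing over $n$ gives $S\ge \sum_{n\ge 2}\Omega(n)=\infty$, and therefore $H(\m^t)=\infty$ as required. The only mildly delicate point is the counting step: I need the Kraft-based bound $|\{x:K^t(x)\le k\}|\le 2^k$ rather than the weaker $2^{k+1}-1$, which is what allows me to conclude that a constant fraction (indeed a half) of the strings of each length $n$ contribute $\Theta(n\cdot 2^{-n})$ each, producing the divergent harmonic-like tail.
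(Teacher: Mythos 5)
Your strategy is genuinely different from the paper's. The paper assumes $\sum_{x}K^t(x)2^{-K^t(x)}$ converges, normalizes it into a semi-measure, and derives a contradiction from the domination property of the universal semi-measure $\m$ together with the existence of arbitrarily long Kolmogorov-random strings; you instead attempt a direct counting argument: reduce $H(\m^t)$ to the series $S=\sum_x K^t(x)2^{-K^t(x)}$ and lower-bound the contribution of each length. The reduction itself is fine, and so is the Kraft-based count: at most $2^{n-1}$ strings satisfy $K^t(x)\le n-1$, hence at least $2^{n-1}$ strings of length $n$ have $K^t(x)\ge n$.

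The gap is in the upper bound you pair with this count. In the prefix-free setting of the paper, $K^t(x)\le|x|+O(1)$ is false, even though the preliminaries state it in that loose form: if every $x$ had a program of length at most $|x|+c_0$ halting within the time bound, then $\sum_x 2^{-K^t(x)}\ge\sum_n 2^n\cdot2^{-n-c_0}=\infty$, contradicting the very Kraft inequality you invoke in the same argument. Your own conclusions make the inconsistency visible: you deduce that each length contributes $\Omega(n)$ to the entropy, which would also force $\m^t$ to put mass at least $c\,2^{-c_0-1}$ on every length $n$, impossible for a probability distribution. The correct prefix bound is $K^t(x)\le|x|+2\log|x|+O(1)$ (encode $|x|$ self-delimitingly, then give $x$ literally). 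With it your argument survives in repaired form: each of the at least $2^{n-1}$ incompressible strings of length $n$ has $K^t(x)\in[n,\,n+2\log n+c_0]$, so by monotonicity of $k\mapsto k2^{-k}$ it contributes at least $(n+2\log n+c_0)\,2^{-(n+2\log n+c_0)}\ge 2^{-c_0}\,2^{-n}/n$, and length $n$ contributes $\Omega(1/n)$ to $S$; the harmonic divergence you anticipated (your ``harmonic-like tail'') then gives $H(\m^t)=\infty$. So the route is salvageable, and once repaired it is arguably more elementary and more quantitative than the paper's semi-measure argument, but as written the key numerical step rests on a bound that cannot hold for prefix machines.
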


\begin{proof}
If $x\geq 2$ then $f(x)=x2^{-x}$ is a decreasing function. Let $A$ be the set of strings such that $-\log \m^t(x)\geq 2$. Since $\m^t$ is computable, $A$ is recursively enumerable. Notice also that $A$ is infinite and contains arbitrarily large Kolmogorov-random strings.
\[\begin{array}{c}
\ds{\sum_{x\in\Sigma^*} - \m^t(x)\log \m^t(x)\geq \sum_{x\in A} - \m^t(x)\log \m^t(x)}\\
\ds{=\sum_{x\in A}c2^{-K^t(x)}(K^t(x) -\log c)}\\
=\ds{-c\log c\sum_{x\in A}2^{-K^t(x)}+c\sum_{x\in A}K^t(x)2^{-K^t(x)}
}\end{array}\]
So if we prove that $\ds{\sum_{x\in A}K^t(x)2^{-K^t(x)}}$ diverges the result follows.

Assume, by contradiction, that $\ds{\sum_{x\in A}K^t(x)2^{-K^t(x)}<d}$ for some $d\in\rr$. Then, considering $r(x)=\ds{\frac{1}{d}K^t(x)2^{-K^t(x)}}$ if $s\in A$ and $r(x)=0$ otherwise, we conclude that $r$ is a semi-measure. Thus, there exists a constant $c'$ such that, for all $x$, $r(x)\leq c'\m(x)$. Hence, for $x\in A$, we have
\[
\frac{1}{d}K^t(x)2^{-K^t(x)}\leq c'2^{-K(x)}
\]

So, $K^t(x)\leq c'd2^{K^t(x)-K(x)}$. This is a contradiction since $A$ contains Kolmogorov - random strings of arbitrarily large size. The contradiction results from assuming that $\ds{\sum_{x\in A}K^t(x)2^{-K^t(x)}}$ converges. So, $H(\m^t)$ diverges.
\end{proof}
Now we show that, similarly to the behavior of entropy of universal distribution, $T_\alpha(\m^t)<\infty$ iff $\alpha>1$ and  $H_\alpha(\m^t)<\infty$ iff $\alpha<1$. First obverse that we have the following ordering relationship between these two entropies for all probability distribution $P$:
\begin{enumerate}
	\item If $\alpha>1$, $\ds{T_\alpha(P)\leq\frac{1}{\alpha-1}+H_\alpha(P)}$;
	\item If $\alpha<1$, $\ds{T_\alpha(P)\geq\frac{1}{\alpha-1}+H_\alpha(P)}$;
	
\end{enumerate}

\begin{theorem}
Let $\alpha\not=1$ be a real computable number. Then we have, $T_\alpha(\m^t)<\infty$ iff $\alpha>1$.
\end{theorem}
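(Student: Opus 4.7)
The plan is to reduce the question to the convergence of $S_\alpha := \sum_x \m^t(x)^\alpha$. From the definition of Tsallis entropy one has
\[
T_\alpha(\m^t) = \frac{1 - S_\alpha}{\alpha - 1},
\]
so $T_\alpha(\m^t)$ is finite iff $S_\alpha$ is finite; and since $\m^t(x) = c \cdot 2^{-K^t(x)}$, this is equivalent to asking when $\sum_x 2^{-\alpha K^t(x)}$ converges.

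The direction $\alpha > 1 \Rightarrow T_\alpha(\m^t) < \infty$ is straightforward: the termwise bound $2^{-\alpha K^t(x)} \leq 2^{-K^t(x)}$ holds because $K^t(x) \geq 0$ and $\alpha - 1 > 0$, so summing gives $S_\alpha \leq c^\alpha \sum_x 2^{-K^t(x)} = c^{\alpha - 1} < \infty$, which makes $T_\alpha(\m^t)$ a finite real.

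For the harder direction $\alpha < 1 \Rightarrow T_\alpha(\m^t) = \infty$, I would prefer to combine the stated inequality $T_\alpha(P) \geq \frac{1}{\alpha - 1} + H_\alpha(P)$ with the classical monotonicity of R\'enyi entropy in $\alpha$ (so $H_\alpha(P) \geq H(P)$ whenever $\alpha \leq 1$), and then invoke Theorem~\ref{hmt}, which states that $H(\m^t) = \infty$. Since $\frac{1}{\alpha - 1}$ is a finite negative number, the inequality then forces $T_\alpha(\m^t) = \infty$. A self-contained alternative is a direct counting argument: by Kraft's inequality applied to the prefix-free set of shortest programs, at most $2^{n-1}$ strings (of any length) have Kolmogorov complexity at most $n-1$, so at least $2^{n-1}$ length-$n$ strings $x$ satisfy $n \leq K(x) \leq K^t(x) \leq n + O(1)$, using the bound $K^t(x) \leq |x| + O(1)$ from the Proposition in Section~2; their joint contribution to $\sum_x 2^{-\alpha K^t(x)}$ is then $\Omega(2^{(1-\alpha)n})$, and summing over $n$ diverges because $1 - \alpha > 0$.

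The main obstacle is precisely the $\alpha < 1$ case. In the R\'enyi-based route, the only nontrivial ingredient beyond Theorem~\ref{hmt} is the monotonicity of $\alpha \mapsto H_\alpha$, which is standard; in the counting route one has to verify carefully that $K^t(x) \leq |x| + O(1)$ does hold for the time bound $t$ in question, so that Kolmogorov-random length-$n$ strings genuinely realize $K^t(x) = n + O(1)$. The hypothesis that $\alpha$ is a computable real is used only to keep $\m^t(x)^\alpha$ well-defined and computable, and plays no further role in the estimates.
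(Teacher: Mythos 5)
Your proof is correct, but both halves take a genuinely different route from the paper. For $\alpha>1$ the paper invokes Tadaki's result that $\sum_x (\m(x))^\alpha$ converges for $\alpha>1$ together with the domination of $\m^t$ by $\m$; your termwise bound $2^{-\alpha K^t(x)}\leq 2^{-K^t(x)}$, giving $\sum_x(\m^t(x))^\alpha\leq c^{\alpha-1}$, is more elementary and needs nothing beyond $\sum_x \m^t(x)=1$. For $\alpha<1$ the paper repeats the semimeasure argument of Theorem~\ref{hmt}: assuming $\sum_x(\m^t(x))^\alpha<d$, it normalizes to a \emph{computable} semimeasure, applies universality of $\m$, and reaches a contradiction at Kolmogorov-random strings; note that this is exactly where the computability of $\alpha$ is used, so your closing remark that it plays no role is true of your argument but not of the paper's. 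Your first route instead reuses Theorem~\ref{hmt} via R\'enyi monotonicity and the stated Tsallis--R\'enyi inequality (equivalently, for $\alpha<1$ both $T_\alpha$ and $H_\alpha$ are finite exactly when $\sum_x(\m^t(x))^\alpha$ is, and $H_\alpha(\m^t)\geq H(\m^t)=\infty$), which is shorter but depends on Theorem~\ref{hmt}; your counting route is fully self-contained. One slip there: at most $2^{n}-1$ (not $2^{n-1}$) strings have $K(x)\leq n-1$, since that is the number of programs of length at most $n-1$; to get a constant fraction of $\Sigma^n$, count strings with $K(x)\leq n-2$ (at most $2^{n-1}-1$), which still gives at least $2^{n-1}$ strings of length $n$ with $n-1\leq K(x)\leq K^t(x)\leq n+O(1)$, hence a contribution of order $2^{(1-\alpha)n}$ and divergence for $\alpha<1$, so the argument survives with adjusted constants.
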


\begin{proof}
From Theorem 8 of \cite{KT}, it is known that $\ds{\sum_{x\in\Sigma^*}(\m(x))^\alpha}$ converges iff $\alpha>1$. Since $\m^t$ is a probability measure there exists a constant $\lambda$ such that, for all $x$, $\m^t(x)\leq \lambda\m(x)$. So, $(\m^t(x))^\alpha\leq (\lambda\m(x))^\alpha$, which implies that $\ds{\sum_{x\in\Sigma^*}(\m^t(x))^\alpha\leq\lambda^\alpha\sum_{x\in\Sigma^*}(\m(x))^\alpha}$, from where we conclude that, for $\alpha>1$,  $T_\alpha(\m^t)$ converges.

For $\alpha<1$, the proof is analogous to the proof of Theorem \ref{hmt}.
Suppose that $\ds{\sum_{x\in\Sigma^*}(\m^t(x))^\alpha<d}$ for some $d\in\rr$. Hence, $r(x)=\ds{\frac{1}{d}(\m^t(x))^\alpha}$ is a computable semi-measure. Then, there exists a constant $\tau$ such that for all $x\in\Sigma^*$, $r(x)=\ds{\frac{1}{d}(c2^{-K^t(x)})^\alpha\leq \tau 2^{-K(x)}}$ which is equivalent to $\ds{\frac{c^\alpha}{d\tau}\leq 2^{\alpha K^t(x)-K(x)}}$. For example, if $x$ is random it follows that $\ds{\frac{c^\alpha}{d\tau}\leq 2^{(\alpha-1)|x|}}$, which is false.
\end{proof}

\begin{theorem}
The Rényi entropy of order $\alpha$ of time bounded universal distribution converges for $\alpha<1$ and diverges if $\alpha>1$.
\end{theorem}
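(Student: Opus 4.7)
My plan is to reduce this to the preceding Tsallis theorem by means of the identity $H_\alpha(\m^t) = \tfrac{1}{1-\alpha}\log S_\alpha(\m^t)$, where $S_\alpha(\m^t) := \sum_x \m^t(x)^\alpha$. The Tsallis theorem already pinned down exactly when $S_\alpha(\m^t)$ is finite, so the Rényi statement is entirely a sign-tracking calculation on $\tfrac{1}{1-\alpha}$ together with a check that $\log S_\alpha(\m^t)$ is a well-defined element of $(-\infty, 0] \cup \{+\infty\}$.

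For $\alpha < 1$: I would quote the Tsallis theorem's contradiction argument (renormalizing $r(x) = \tfrac{1}{d}(\m^t(x))^\alpha$ to a computable semi-measure, dominating by $\m$, and evaluating at a Kolmogorov-random string of growing length to contradict $c^\alpha/(d\tau) \leq 2^{(\alpha-1)|x|}$) to conclude $S_\alpha(\m^t) = +\infty$. Then $\log S_\alpha(\m^t) = +\infty$ and, since the prefactor $\tfrac{1}{1-\alpha} > 0$, the product diverges: $H_\alpha(\m^t) = +\infty$.

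For $\alpha > 1$: I would invoke the domination $\m^t(x) \leq \lambda\,\m(x)$ for some constant $\lambda$, together with Theorem 8 of \cite{KT}, to obtain $S_\alpha(\m^t) \leq \lambda^\alpha \sum_x \m(x)^\alpha < \infty$. I would further note that $0 < S_\alpha(\m^t) \leq 1$: the upper bound from $(\m^t(x))^\alpha \leq \m^t(x)$ summed against $\sum_x \m^t(x) = 1$, and the lower bound from any single positive summand. Hence $\log S_\alpha(\m^t) \in (-\infty, 0]$, and multiplying by the negative prefactor $\tfrac{1}{1-\alpha} < 0$ yields a finite nonnegative value for $H_\alpha(\m^t)$.

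The only real technicality is ensuring strict positivity of $S_\alpha(\m^t)$ in the second regime so that $\log S_\alpha(\m^t)$ is a genuine real number rather than $-\infty$, and this is immediate. Beyond that, the proof is a mechanical sign-tracking corollary of the Tsallis theorem; the two $\alpha$-ranges of the present theorem correspond to the two convergence regimes there, with the sign of $\tfrac{1}{1-\alpha}$ determining how $\log S_\alpha(\m^t)$ translates into finiteness of $H_\alpha(\m^t)$.
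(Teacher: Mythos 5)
Your conclusions are correct, but note that they are the \emph{opposite} of the theorem as printed: you establish that $H_\alpha(\m^t)$ is finite for $\alpha>1$ and infinite for $\alpha<1$, whereas the statement claims convergence for $\alpha<1$ and divergence for $\alpha>1$. This is a typo in the paper --- its own proof also concludes finiteness for $\alpha=1+\varepsilon$ and divergence for $\alpha<1$, and indeed $H_\alpha(P)=\frac{1}{1-\alpha}\log\sum_x P(x)^\alpha$ is trivially finite for \emph{any} probability distribution when $\alpha>1$, exactly by your observation that $0<\sum_x P(x)^\alpha\le 1$. So you have proved what the paper actually proves, not what it states.

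As for the route: it is genuinely different from the paper's. You reduce both regimes to the Tsallis theorem via the identity $H_\alpha=\frac{1}{1-\alpha}\log S_\alpha$ and then track signs, which is cleaner and more unified. The paper instead handles $\alpha>1$ by a self-contained computation (bounding $K^t(x)\le|x|+c$ and summing the resulting geometric series $\sum_n 2^{-n\varepsilon}$), and handles $\alpha<1$ by monotonicity of $H_\alpha$ in $\alpha$ together with Theorem 4 ($H(\m^t)=\infty$), never invoking the Tsallis result. Your approach buys economy; the paper's $\alpha>1$ computation buys an explicit quantitative upper bound, and its $\alpha<1$ argument buys generality. That last point is the one substantive caveat for you: the Tsallis theorem is stated for \emph{computable} $\alpha$, since its $\alpha<1$ direction needs $r(x)=\frac{1}{d}(\m^t(x))^\alpha$ to be a computable semi-measure dominated by $\m$. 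Your $\alpha<1$ argument inherits that restriction, while the R\'enyi statement carries no computability hypothesis and the paper's monotonicity argument works for arbitrary real $\alpha<1$. The gap is easily closed --- pick a computable $\alpha'$ with $\alpha<\alpha'<1$ and note $\sum_x(\m^t(x))^\alpha\ge\sum_x(\m^t(x))^{\alpha'}=\infty$ because $\m^t(x)\le 1$ --- but as written your reduction proves slightly less than claimed.
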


\begin{proof}
Consider $\alpha=1+\varepsilon$, where $\varepsilon>0$. Since for all $x\in\Sigma^*$, $\ds{K^t(x)}  \leq  |x| + c'$ then 
$2^{-|x|+c} \leq 2^{-K^t(x)}$. Since $f(y)=y^{1+\varepsilon}$ increases in $[0,1]$, it is also true that for all $x\in\Sigma^*$, $(2^{-|x|+c})^{1+\varepsilon} \leq(2^{-K^t(x)})^{1+\varepsilon}$. So, summing up over all $x\in\Sigma^*$ and applying $-\log$ we conclude that 
$$
 - \log \sum_x(2^{-K^t(x)})^{1+\varepsilon}  \leq  - \log \sum_x(2^{-|x|+c})^{1+\varepsilon}
$$

If we prove that the series $\ds{\sum_{x\in\Sigma^*}(2^{-|x|+c})^{1+\varepsilon}}$ converges, then the Rényi entropy of order $1+\varepsilon$ of $\m^t$ also converges.
$$\begin{array}{ll}
\ds{\sum_{x\in\Sigma^*}(2^{-|x|+c})^{1+\varepsilon}}&=\ds{\sum_{n=1}^{\infty}\sum_{x \in \Sigma^n}(2^{-n+c})^{1+\varepsilon}}\\
  &=  \ds{\sum_{n=1}^{\infty}\sum_{x \in \Sigma^n}2^{-n-n\varepsilon+c+c\varepsilon}}\\
&=\ds{\sum_{n=1}^{\infty} 2^n \times 2^{-n-n\varepsilon} \times 2^{c+c\varepsilon}}\\  &=  \ds{2^{c+c\varepsilon}\sum_{n=1}^{\infty} 2^{-n\varepsilon}}\\
 &= \ds{ 2^{c+c\varepsilon}\times \frac{2^\varepsilon}{2^\varepsilon-1}<\infty}
\end{array}$$  

Now, assume that $\alpha<1$. Since the  Rényi entropy is non increasing with $\alpha$, for any distribution $P$ we have $H(P) \leq H_\alpha(P)$. So, in particular, $H(\m^t)\leq H_\alpha(\m^t)$. As $H(\m^t)$ diverges we conclude that the Rényi entropy of order $\alpha < 1$ for the time bounded universal distribution diverges.
\end{proof}
%
%
%
%
%
%
%

\end{document}